\documentclass[lettersize,journal]{IEEEtran}

\usepackage{amsmath,amssymb}
\usepackage{subfigure}
\usepackage{hyperref}
\usepackage{graphicx,graphics,color,psfrag}
\usepackage{cite,balance}
\usepackage{caption}
\captionsetup{font={small}}
\allowdisplaybreaks
\usepackage{algorithm}
\usepackage{algorithmic}
\usepackage{accents}
\usepackage{amsthm}
\usepackage{bm}
\usepackage{url}
\usepackage[english]{babel}
\usepackage{multirow}
\usepackage{enumerate}
\usepackage{cases}
\usepackage{stfloats}
\usepackage{dsfont}
\usepackage{color,soul}
\usepackage{amsfonts}
\usepackage{cite,graphicx,amsmath,amssymb}
\usepackage{subfigure}
\usepackage{fancyhdr}
\usepackage{hhline}
\usepackage{graphicx,graphics}
\usepackage{array,color}
\usepackage{mathtools}
\usepackage{amsmath}
\usepackage{amsthm}
\usepackage{caption}
\usepackage{circledsteps}

\captionsetup[figure]{name={Fig.},labelsep=period} 

\include{header}
\addtolength{\textfloatsep}{0mm}
\setlength{\belowcaptionskip}{-5pt}

\hyphenation{op-tical net-works semi-conduc-tor IEEE-Xplore}

\newtheorem{theorem}{Theorem}

\newtheorem{corollary}{Corollary}

\usepackage{algorithm}
\def\BibTeX{{\rm B\kern-.05em{\sc i\kern-.025em b}\kern-.08em
		T\kern-.1667em\lower.7ex\hbox{E}\kern-.125emX}}
\begin{document}

	\title{Near-field Communications with Extremely Large-Scale Uniform Arc Arrays: Channel Modelling and Performance Analysis }
	
		\author{\IEEEauthorblockN{Guoyu Li,~\IEEEmembership{Student Member,~IEEE},~Changsheng You,~\IEEEmembership{Member,~IEEE},\\~Guanyu Shang, and~Shaochuan Wu,~\IEEEmembership{Senior Member, IEEE}}

		\IEEEcompsocitemizethanks{\IEEEcompsocthanksitem G. Li is with the Department of Electronic and Electrical Engineering, Southern University of Science and Technology (SUSTech), Shenzhen 518055, China, and also with the Electronic and Information Engineering, Harbin Institute of Technology (HIT), Harbin 150001, China (e-mail: lgy@stu.hit.edu.cn). 
		C. You is with the Department of Electronic and Electrical Engineering, Southern University of Science and Technology (SUSTech), Shenzhen 518055, China ( youcs@sustech.edu.cn).
		G. Shang is with the Innovation Photonics and Imaging Center, School of Instrumentation Science and Engineering, Harbin Institute of Technology (HIT), Harbin, China (email: shangguanyu95@163.com).
		S. Wu is with the School of Electronics and Information Engineering, Harbin Institute of Technology (HIT), Harbin 150080, China (e-mail: scwu@hit.edu.cn). (Corresponding authors: Changsheng You.)
		}
		% \IEEEcompsocitemizethanks{\IEEEcompsocthanksitem C. You is with the Department of Electronic and Electrical Engineering, Southern University of Science and Technology (SUSTech), Shenzhen 518055, China ( youcs@sustech.edu.cn).
		% }
		% \IEEEcompsocitemizethanks{\IEEEcompsocthanksitem G. Shang is with the Innovation Photonics and Imaging Center, School of Instrumentation Science and Engineering, Harbin Institute of Technology (HIT), Harbin, China (email: shangguanyu95@163.com).
		% }
		% \IEEEcompsocitemizethanks{\IEEEcompsocthanksitem S. Wu is with the School of Electronics and Information Engineering, Harbin Institute of Technology (HIT), Harbin 150080, China (e-mail: scwu@hit.edu.cn).
		% }
	}
	
	% \markboth{IEEE Wireless Communications Letters}%
	% {Shell \MakeLowercase{\textit{et al.}}: A Sample Article Using IEEEtran.cls for IEEE Journals}

	\maketitle
	
	\begin{abstract}
	
	% This letter studies a novel conformal array architecture, termed as extremely large-scale (XL) uniform arc array (UAA), which can seamlessly connect to curved structures at different angles, meeting the practical constraints of the installation structure. Considering the non-uniform spherical wave characteristics of the channel, we conducted mathematical modeling and performance analysis for XL-UAA. The results indicate that XL-UAA has a larger effective Rayleigh distance and uniform power distance than XL uniform linear array (ULA), and its closed-form signal-to-noise ratio (SNR) expression depends in a sophisticated manner on the collective properties of XL-UAA, including the distance between the user and the center of the circle and the radius of the arc. The numerical results show that XL-UAA has a better SNR than XL-ULA. Furthermore, we found that the SNR expression of XL-ULA is a special form of the SNR expression of XL-UAA. Asymptotic analysis demonstrates that the asymptotic SNR depends on the projection distance from the user to the top of the XL-UAA.
	% , which can flexibly attach to curved structures
	In this letter, we propose a new conformal array architecture, called \emph{extremely large-scale uniform arc array} (XL-UAA), to improve near-field communication performance. Specifically, under the non-uniform spherical wavefront channel model, we establish mathematical modeling and performance analysis for XL-UAAs. It is shown that XL-UAAs have larger direction-dependent Rayleigh distance and uniform power distance than the conventional XL uniform linear arrays (XL-ULAs). Moreover, a closed-form expression for the signal-to-noise ratio (SNR) is obtained, which depends on collective properties of XL-UAAs, such as the distance between the user and the array center, as well as the arc radius. In addition, we show that the asymptotic SNR of XL-UAAs with the number of antennas depends on the projection distance of the user to the middle of the arc array. Finally, numerical results verify that XL-UAAs achieve a higher SNR than XL-ULAs, especially at larger user incident angles.

	\end{abstract}
	
	\begin{IEEEkeywords}
		Extremely large-scale uniform arc array (XL-UAA), near-field communications, performance analysis.
	\end{IEEEkeywords}

	\section{Introduction}

	% \IEEEPARstart{T}{he} evolution of wireless communication networks has considerably contributed to the proliferation of the Internet of Things (IoT), and it is expected that the quantity of wireless IoT devices will escalate to 5000 billion by 2030. Along with the increasing automation and intelligence of IoT networks, there are two urgent issues that need to be addressed. On one hand, the data transmissions of vast IoT devices require a significant amount of power consumption. On the other hand, the ubiquitous connectivity of IoT devices will result in substantial occupation of spectrum resources, thereby exacerbating the problem of spectrum scarcity. The challenges posed by energy consumption and spectrum occupancy inevitably limit the future development of IoT networks, thus underscoring the need for enhancing energy-efficiency and spectrum utilization in IoT networks.

	The future sixth-generation communications (6G) impose more stringent performance requirements such as super-high spectral efficiency, ultra-reliable communications, extremely low latency, and massive connectivity, etc~\cite{you2024next}. As one of promising technologies for 6G, extremely large-scale arrays (XL-arrays) increase the number of antennas by at least an order-of-magnitude, which can significantly improve the spectral efficiency and spatial resolution of wireless systems~\cite{10496996}. Moreover, the drastically increased array aperture renders the users more likely to be located in the near-field region, resulting in fundamental changes in the electromagnetic propagation environment. In near-field communications, the conventional planar wavefront model is no longer accurate.Therefore, the more accurate spherical wavefront channel model needs to be considered, which brings both new opportunities and challenges~\cite{10239282, 9913211, zhou2024multi}.
    
	Among others, uniform linear arrays (ULAs) have been widely considered in the literature for near-field communications. For example, the authors in~\cite{lu2021does} analyzed the asymptotic performance of XL-ULAs. 
	The results showed that the resulting signal-to-noise ratio (SNR) under the non-uniform spherical wavefront (NUSW) model grows nonlinearly with the number of antennas according to a defined angular span, which demonstrates the importance of accurate modeling for near-field communications. 
	% In addition, XL-ULAs not only provide spatial resolution in the angular domain, but also endow additional spatial resolution in the distance domain. 
	% Specifically, the concept of location division multiple access was proposed, which demonstrated that using additional spatial resolution in the distance domain to provide services for users at different locations can significantly reduce inter-user interferences and improve spectral efficiency~\cite{10123941}. 
	Although the XL-ULA has shown significant advantages in near-field communications, it also faces several challenges. First, as the incident angle increases, the near-field region of XL-ULAs rapidly shrinks, which means that it can only provide near-field services for areas within small incident angles~\cite{10243590}. Second, XL-ULAs impose more stringent requirements for the installation environment which may not be suitable for installation on curved objects~\cite{10485481}. 
	In order to overcome these limitations, \emph{conformal} arrays have emerged as a promising solution. 
    Specifically, some preliminary studies have explored uniform circular arrays (UCAs) to extend the near-field region within a 360° angular space, hence allowing more users to enjoy the benefits of near-field communications~\cite{10243590}. 
	Moreover, efficient beam training schemes suitable for narrowband and wideband XL-UCA systems were proposed in~\cite{10005200} and~\cite{chen2024near} to effectively reducing the required beam training overhead.
    However, it is impossible for XL-UCAs to serve users through all radiating antennas. In practice, for hotspots in specific regions, only portions of XL-UCA antennas facing the users contribute to useful received signal power.

	To address the above issues, we consider in this letter a new conformal array, called XL uniform arc arrays (XL-UAAs). As illustrated in Fig.~\ref{System_model}, XL-UAAs can more efficiently utilize the available area in limited space. Moreover, XL-UAAs have appealing flexibility and adaptability, which can be flexibly attached to the curved structures such as street lights~\cite{10485481}. 
	To characterize the communication performance of XL-UAAs, we first establish the mathematical modeling and analyze the near-field communication performance of XL-UAAs under the general NUSW channel model. Specifically, the direction-dependent Rayleigh (DDRayl) distance and uniform power distance (UPD) of the XL-UAA are first obtained, which show that XL-UAAs have larger DDRayl distance and UPD than XL-ULAs. 
	Subsequently, a closed-form expression for the received SNR at the user in XL-UAA systems is obtained, which depends on collective properties of the XL-UAA in a sophisticated manner. The analysis reveals that the developed channel modelling for XL-UAAs includes that of XL-ULAs as a special case. 
	Finally, we analytically show that the asymptotic SNR of XL-UAAs depends on the projection distance of the user to the middle of the arc array.
	% Motived by the above, we study the mathematical modeling and performance analysis for XL-UAAs. Under the NUSW characteristics, the direction-dependent Rayleigh (DDRayl) distance and uniform power distance (UPD) of the XL-UAA are first derived, and the results show that XL-UAA has larger DDRayl distance and UPD than XL-ULA. Subsequently, a closed-form SNR expression of the maximum ratio combining (MRC) beamforming of XL-UAA is derived, which shows that the obtained closed-form SNR expression depends on the collective properties of the XL-UAA in a sophisticated manner. The analysis reveals that the developed modeling of XL-UAA is a general form of XL-ULA. Asymptotic analysis shows that as the number of antennas increases, the SNR gradually approaches a constant, which depends on the projection distance from the user to the top of XL-UAA.

	\section{System Model}
	
	\begin{figure}
		\centering
		\includegraphics[width = 0.7\linewidth]{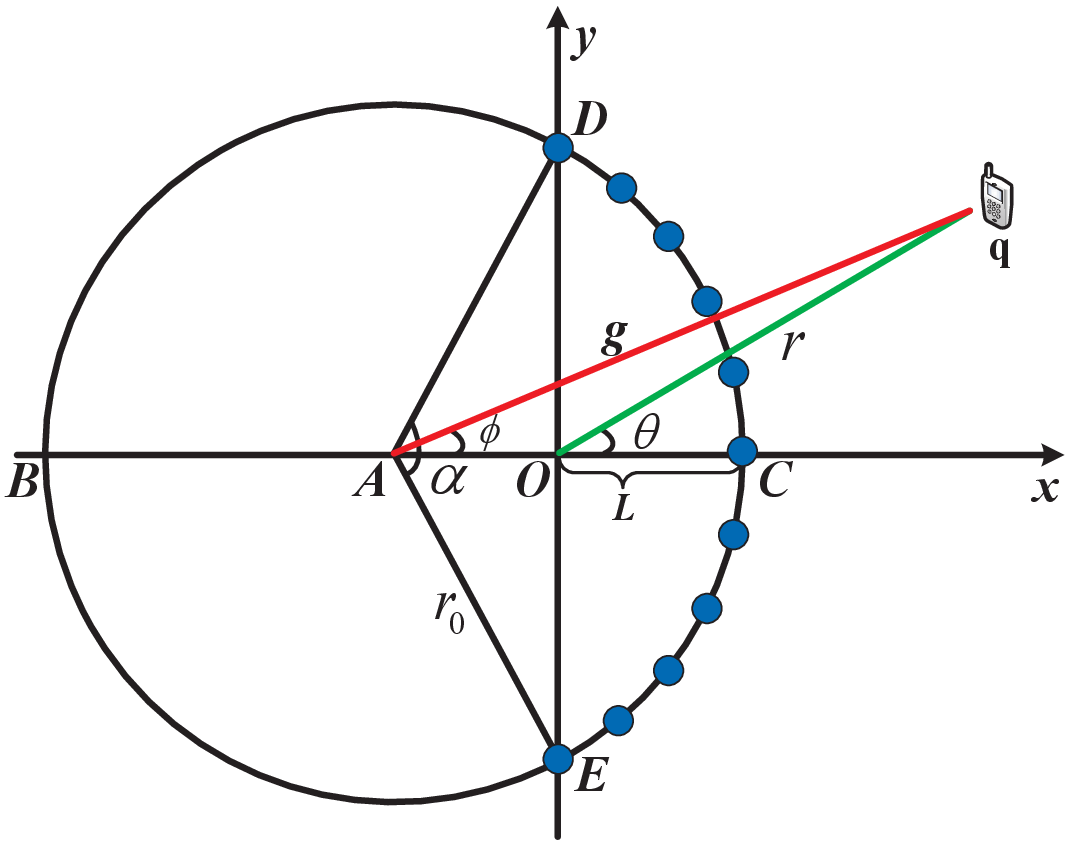}
		\caption{The geometrical relationship between the XL-UAA and user in the near-field region.}
		\label{System_model} 
		\vspace{-5pt}
	\end{figure}
	We consider an XL-UAA communication system as shown in Fig.~\ref{System_model}, where a base station (BS) equipped with an $M$-antenna XL-UAA communicates with a single-antenna user\footnote{The obtained results can be extended to the multi-user case by considering more complicated beamforming techniques at the BS similar as in \cite{zhou2024multi}, such as zero-forcing, and minimum mean-square error.}.

	\underline{\textbf{Near-field Channel Model:}} 
	We assume that the user is located in the near-field region of the XL-UAA, where the BS-user distance is smaller than the (effective) Rayleigh distance for the XL-UAA, which will be presented in Section~\ref{performance_analysis}. As such, we consider the near-field spherical wavefront channel modeling for XL-UAAs, as detailed below.

	Without loss of generality, we assume that $M$ is an odd number. 
	As a general form of XL-UCAs, the XL-UAA connecting the arc between $D$ and $E$ is placed on the $x$-$y$ plane, where the central antenna is located at the middle of arc, i.e., $C$ in Fig.~\ref{System_model}.
	The inter-antenna spacing $d$ is set as half-wavelength, i.e., $d = \lambda/2$ with $\lambda$ denoting the carrier wavelength. To facilitate the analysis, we let $A$ denote the center of the arc, $r_0$ denote the arc radius, $L$ denote the length of line segment connecting the origin $O$ and the point $C$, and $\alpha$ denote the central angle formed by the line segments of $AD$ and $AE$. Then, the inter-antenna angle relative to the arc center $A$ is $\alpha_0 = \frac{\alpha}{M-1}$. 
	Furthermore, we consider the case $L \le r_{0}$, so that all antennas in the XL-UAA can provide services to users.
	From the geometry relationship, the location of antenna $m$ is given by $\mathbf{w}_m = \left[ r_0 \cos \alpha_{m}  - (r_0 -L), r_0 \sin \alpha_{m} \right]^{T}$, where $\alpha_m = m \alpha_0$, $m = 0,\pm 1,\cdots,\pm (M-1)/2$. 
	
	Let $\mathbf{q} = \left[ r \cos\theta, r \sin\theta \right]^{T}$ denote the location of the user, where $r$ represents the distance between the origin $O$ and the user, and $\theta \in \left[ -\frac{\pi}{2},\frac{\pi}{2} \right]$ is the angle of the user relative to the origin, i.e., the angle spanned by two line segments $OP$ and $OC$. Then, the distance between the $m$th antenna and the user can be expressed as 
	\begingroup\makeatletter\def\f@size{9}\check@mathfonts
		\begin{equation}
			\begin{aligned} \label{equ_2}
				&r_{m} = \left\| \mathbf{w}_m - \mathbf{q} \right\|    \\
				& = \sqrt{g^{2} + r_{0}^{2} - 2{r_0}r \sin\theta \sin\alpha_m - 2{r_0}(r \cos\theta + ({r_0}-L)) \cos\alpha_m} \\
				& = \sqrt{g^{2} + r_{0}^{2} - 2{r_0}g \sin\phi \sin({m\varepsilon}) - 2{r_0}g \cos\phi \cos({m\varepsilon})},
			\end{aligned} 
		\end{equation}
	\endgroup 
	where $g = \sqrt{{r^2} + (r_0-L)^2 + 2r(r_0-L)\cos\theta}$ is the distance between the user and the arc center $A$, $\phi$ is the angle spanned by the line segments $AP$ and $AC$, and $\varepsilon \buildrel \Delta \over = \alpha_0 = \frac{\alpha}{M-1}$. Furthermore, we consider the case where the user is outside the arc, so that the distance from the user to the center of the arc (i.e., node A) is greater than the radius of the arc, i.e., $g>r_0$. Then, based on the NUSW channel model, the general near-field multi-path channel from the user to the BS can be modeled as
	\begingroup\makeatletter\def\f@size{10}\check@mathfonts
		\begin{equation} \label{Channel_model}
		  \mathbf{h}_{\mathrm{near}} = \mathbf{b}\left( r,\theta \right) + \sum_{\ell = 1}^{L} \mathbf{b}_{\ell}\left( r_{\ell},\theta_{\ell} \right),
		\end{equation}
	\endgroup
	which includes one LoS path and $L$ NLoS paths. Here, the parameters $r_{\ell}$ and $\theta_{\ell}$ are the distance and spatial angle of the $\ell$th path, respectively. Similar to~\cite{zhou2024multi}, we mainly considers the near-field communication scenarios in high-frequency bands such as mmWave and THz. As such, the NLoS channel paths in these scenarios exhibit negligible power due to severe path-loss and shadowing. Then the channel $\mathbf{h}_{\mathrm{near}}$ can be approxiamted as $\mathbf{h}_{\mathrm{near}} = \mathbf{b}\left( r,\theta \right)$, where $\mathbf{b}\left( r,\theta \right)$ is given by
	\begingroup\makeatletter\def\f@size{10}\check@mathfonts
		\begin{equation}
			\left[ \mathbf{b}\left(r,\theta \right) \right]_{m} = \frac{\sqrt{\beta_0} }{r_{m}}{e^{-j\frac{2\pi}{\lambda}{r_{m}}}}, \forall m = 0, \pm 1, \cdots, \pm \frac{M-1}{2}
		\end{equation}
	\endgroup
	with $\beta_0$ representing the reference channel gain at the distance of 1 m.
	% where ${{\rm{b}}_m}\left( {r,\theta} \right) = \frac{1}{r_{m}}{e^{-j\frac{2\pi}{\lambda}{r_{m}}}}$ is the antenna-wise complex-valued channel gain.
	  
	\underline{\textbf{Signal Model:}} 
	We consider XL-UAA uplink communication in this letter, while the obtained results can be easily extended to the downlink case. Let $s \in \mathbb{C}$ denote the transmitted symbol by the user with power $P$ and $\mathbf{v} \in \mathbb{C}^{M \times 1}$ represent the linear receive beamforming vector with $\left\| \mathbf{v} \right\| = 1$. The resulting signal after the receive beamforming is given by
	\begingroup\makeatletter\def\f@size{10}\check@mathfonts
		\begin{equation}
		  y = \sqrt{P} \mathbf{v}^{H} {\bf{h}_{\mathrm{near}}} s + \mathbf{v}^{H} \mathbf{n}, 
		\end{equation}
	\endgroup
	where $\mathbf{n} \sim \mathcal{N}_{\mathbb{C}}\left(\mathbf{0}_{M},\sigma^{2}\mathbf{I}_{M}\right)$ is the received additive white Gaussian noise (AWGN) vector. Then, the received SNR at the user is given by
	\begingroup\makeatletter\def\f@size{10}\check@mathfonts
		\begin{equation}
		  \gamma_{\mathrm{UAA}} = \frac{P \left| \mathbf{v}^{H} {\bf{h}_{\mathrm{near}}} \right|^2}{ \left\| \mathbf{v}^{2} \right\| \sigma^{2}} = \bar{\gamma}_0 \left| \mathbf{v}^{H} \mathbf{b}\left(r,\theta \right) \right|^2, 
		\end{equation}
	\endgroup
	where $\bar{\gamma}_0 = \left( P \beta_0 \right)/\sigma^{2}$ is the received SNR at the reference distance.

	\section{Near-field Communication Performance for XL-UAAs} \label{performance_analysis}

	In this section, we first present the Rayleigh distance for the XL-UAA and then characterize its SNR performance in closed form based on maximum ratio combining (MRC). Subsequently, we provide asymptotic SNR performance analysis for XL-UAAs when the number of antennas is sufficiently large. Useful insights are drawn to show the performance gains of XL-UAAs as compared to conventional XL-ULAs.

	\subsection{Rayleigh distance for XL-UAAs}

	Under the NUSW model, both the linear phase approximation and the uniform amplitude approximation are invalid. Thus, we adopt the DDRayl distance and UPD to show the impact of signal direction on the phase variations and the amplitude variations across the XL-UAA antennas, respectively. 

	\subsubsection{The DDRayl distance} 

	The DDRayl distance takes into account the influence of signal direction on the phase variations across the array antennas. Let $\Delta(r,\theta)$ define the maximum phase error across the array antennas in the distance $r$ and direction $\theta$, which can be expressed as~\cite{10496996}
	% which is the difference between the exact phase and that based on the first-order Taylor phase approximation, expressed as
	\begingroup\makeatletter\def\f@size{10}\check@mathfonts
		\begin{equation} \label{Phase_difference}
			\Delta(r,\theta) \buildrel \Delta \over = \max_{m} \frac{2\pi}{\lambda}\left(r_m - r_m^{\mathrm{first}}\right), 
		\end{equation}
	\endgroup	
	where $r_m^{\mathrm{first}} = r_m + (r_0-L)\cos\theta-r_0\cos(\theta-\alpha_m)$ is the first-order Taylor distance approximation of $r_m$. According to~\cite{lu2021communicating}, the DDRayl distance can be obtained by imposing the maximum phase variation $\pi/8$ as follow: 
	\begingroup\makeatletter\def\f@size{10}\check@mathfonts
  		\begin{equation} \label{DDRayl}
    		r_{\mathrm{Ray1}}^{\mathrm{UAA}}(\theta) \buildrel \Delta \over = \mathrm{arg} \min_{r} \Delta(r,\theta) \le \frac{\pi}{8}.
  		\end{equation}
	\endgroup
	% begingroup\makeatletter\def\f@size{10}\check@mathfonts
  	% 	\begin{equation} \label{DDRayl}
    % 		r_{\mathrm{Ray1}}^{\mathrm{UAA}}(\theta) \buildrel \Delta \over = \mathrm{arg} \min_{r} \Delta(r,\theta) \le \frac{\pi}{8}.
  	% 	\end{equation}
	% \endgroup
	By replacing the exact distance $r_m$ in~\eqref{Phase_difference} with its second-order Taylor approximation, the DDRayl distance of XL-UAA can be approximately obtained as 
	\begingroup\makeatletter\def\f@size{9}\check@mathfonts
  		\begin{equation} \label{DDRayl_approx}
    		r_{\mathrm{Ray1}}^{\mathrm{UAA}}(\theta) \approx \max_{m} \left\{ \frac{ 8[ ({r_0} - L)\sin\theta - {r_0}\sin(\theta-\alpha_{m}) ]^2}{\lambda}\right\} . 
  		\end{equation}
	\endgroup
	Due to the sophisticated structure of the XL-UAA, it is difficult to obtain a closed-form expression for the DDRayl distance, while the explicit value in \eqref{DDRayl_approx} can be obtained numerically. For example, given the same aperture, the DDRayl distance of the XL-UAA at $\theta = 0$ is $r_{\mathrm{Ray1}}^{\mathrm{UAA}}(0) = 8({r_0}\sin\frac{\alpha}{2})^2/\lambda$, which is equal to the DDRayl distance of the XL-ULA at $\theta = 0$. However, the DDRayl distance of the XL-UAA at $\theta = \frac{\pi}{2}$ is $r_{\mathrm{Ray1}}^{\mathrm{UAA}}(\frac{\pi}{2}) = \frac{8L^2}{\lambda}$, which is much larger than the DDRayl distance of XL-ULA $r_{\mathrm{Ray1}}^{\mathrm{ULA}}(\frac{\pi}{2}) = 0$ at $\theta = \frac{\pi}{2}$ given the same aperture. This thus shows that XL-UAAs can provide a larger effective near-field region than XL-ULAs when the users are located in the large incident angles.

	% Specifically, the DDRayl distance of the XL-UAA is equal to the DDRayl distance of the XL-ULA with the same aperture at $\theta = 0$, i.e., $r_{\mathrm{DDRayl}}^{\mathrm{UAA}}(0) =  8({r_0}\sin\frac{\alpha}{2})^2/\lambda = 2d_{DE}^2/{\lambda}$, where $d_{DE}$ is the array aperture. The DDRayl distance of the XL-UAA is greater than the DDRayl distance of XL-ULA at $\theta = \frac{\pi}{2}$, i.e., $r_{\mathrm{DDRayl}}^{\mathrm{UAA}}(\frac{\pi}{2}) = \frac{8L^2}{\lambda} > r_{\mathrm{DDRayl}}^{\mathrm{ULA}}(\frac{\pi}{2}) = 0$.

	\subsubsection{UPD} 
	Unlike the (direction-dependent) Rayleigh distance, UPD focuses on the amplitude variations across the XL-UAA. Let $\Upsilon(r,\theta)$ denote the ratio of the weakest and the strongest power across the array antennas in the distance $r$ and direction $\theta$. Based on the geometry relationship, $\Upsilon(r,\theta)$ can be obtained as~\cite{10496996}
	\begingroup\makeatletter\def\f@size{10}\check@mathfonts
		\begin{equation} \label{UPD__approx}
			\Upsilon(r,\theta) = \frac{\min\limits_{m} \left\{ U(m) \right\}}{r^2+r_0^2\sin^2\frac{\alpha}{2}+2rr_0\sin\theta\sin\frac{\alpha}{2}} , 
		\end{equation}
	\endgroup	
    where $U(m) = r^2+r_0^2+(r_0-L)^2-2rr_0\cos(\theta-\alpha_m)-2(r_0-L)(r_0\cos\alpha_{m}-r\cos\theta)$. According to~\cite{10496996}, UPD can be obtained by imposing a minimum threshold for $\Upsilon(r,\theta)$:
	\begingroup\makeatletter\def\f@size{10}\check@mathfonts
		\begin{equation} \label{UPD}
			r_{\mathrm{Ray2}}^{\mathrm{UAA}}(\theta) \buildrel \Delta \over = 
			\mathrm{arg} \min_{r} \Upsilon(r,\theta) \ge \Upsilon_{\mathrm{th}}, 
		\end{equation}
	\endgroup	
	where $\Upsilon_{\mathrm{th}}$ is a certain threshold. Similar to the DDRayl distance, it is also difficult to obtain a closed-form expression for the UPD of XL-UAA. However, several useful insights can be obtainted. For example, the power ratio of the XL-UAA is equal to that of the XL-ULA at $\theta = \frac{\pi}{2}$ given the same aperture. Therefore, the UPDs of XL-UAA and XL-ULA are also the same at $\theta = \frac{\pi}{2}$. In addition, when $\theta = 0$ and $\Upsilon_{\mathrm{th}}=0.9$, the UPD of XL-UAA $r_{\mathrm{Ray2}}^{\mathrm{UAA}}(0)=(20L+\sqrt{390L^2+36D^2})/2$ is greater than that of XL-UAA $r_{\mathrm{Ray2}}^{\mathrm{ULA}}(0)=3D$, where $D$ is the array aperture.
	
	\subsection{SNR analysis for XL-UAA}

	It is well known that for single-user communication, MRC (i.e., $\mathbf{v} = \frac{{\bf{b}}\left( {r,\theta} \right)}{\left \| {\bf{b}}\left( {r,\theta} \right) \right \| }$) is the optimal receive beamforming. As such, the resulting maximum SNR, $\gamma_{\mathrm{UAA}} = \bar{\gamma}_0 \left \| {\bf{b}}\left( {r,\theta} \right) \right \|^{2}$, is given by
	\begingroup\makeatletter\def\f@size{10}\check@mathfonts
  		\begin{equation}  \label{sum_SNR}
    		\begin{aligned}
				\gamma&_{\mathrm{UAA}} = \bar{\gamma}_0 \left \| {\bf{b}}\left( {r,\theta} \right) \right \|^{2} = \bar{\gamma}_0 \sum \limits_{m = - \frac{{M-1}}{2}}^{\frac{{M-1}}{2}} \frac{1}{r_m^2} = \sum\limits_{m = -\frac{{M-1}}{2}}^{\frac{{M-1}}{2}} \\
      			& \times \frac{\bar{\gamma}_0}{g^{2} + r_{0}^{2} - 2{r_0}g \cos\phi \cos({m\varepsilon}) - 2{r_0}g \sin\phi \sin({m\varepsilon}) }.
    		\end{aligned} 
  		\end{equation}
	\endgroup

	\begin{figure}
		\centering
		\includegraphics[width=0.8\linewidth]{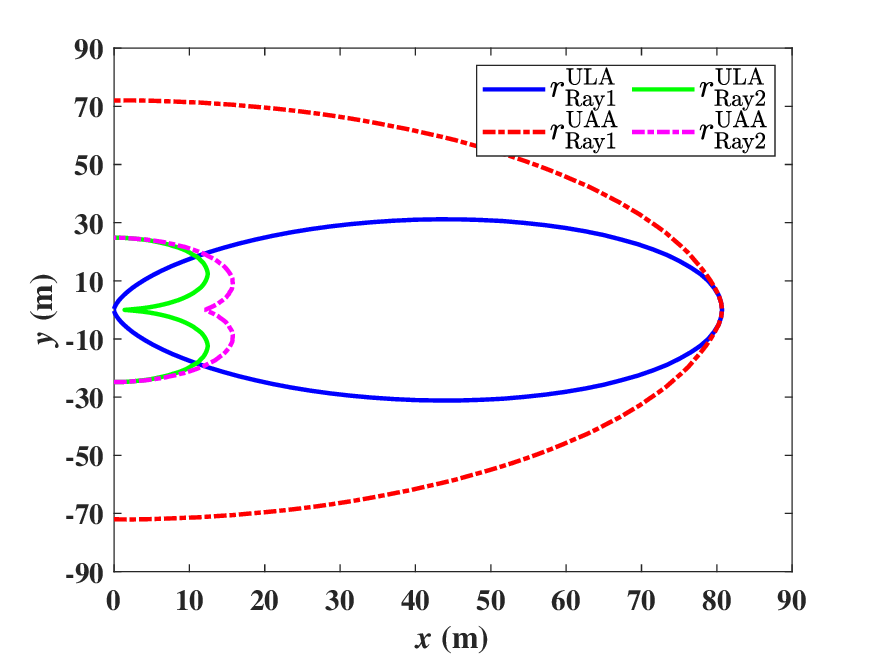}
		\caption{Comparison of the DDRayl distance and UPD.}
		\label{Distances}
		\vspace{-5pt}
	\end{figure}
	
	\begin{theorem} \label{Theorem1}
		Given the MRC beamforming, the received SNR at the XL-UAA in \eqref{sum_SNR} can be expressed as
		\begingroup\makeatletter\def\f@size{10}\check@mathfonts
			\begin{equation} \label{max_SNR}
				\gamma_{\mathrm{UAA}} = 2 \frac{\bar{\gamma}_0 \left( {M-1} \right)}{\alpha ({g^2}-{r_0^2})} \times U(g,r_0), 
			\end{equation}
		\endgroup
		where $U(x,y) = \arctan \left( \frac{(x^2 + y^2 + 2xy\cos{\phi}) \sqrt{\frac{L}{2y-L}}-2xy\sin{\phi}}{x^2 - y^2}\right) + \arctan \left( \frac{(x^2 + y^2 + 2xy\cos{\phi}) \sqrt{\frac{L}{2y-L}}+2xy\sin{\phi}}{x^2 - y^2}\right)$.
	\end{theorem}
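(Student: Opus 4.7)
My plan is to convert the finite sum in \eqref{sum_SNR} into a closed-form integral via a Riemann-sum approximation, then evaluate the integral by a Weierstrass substitution. First, using $\cos\phi\cos(m\varepsilon)+\sin\phi\sin(m\varepsilon)=\cos(\phi-m\varepsilon)$, each summand's denominator collapses to $g^{2}+r_{0}^{2}-2gr_{0}\cos(\phi-m\varepsilon)$. Since the $M$ sample points $\{m\varepsilon\}$ are equally spaced on $[-\alpha/2,\alpha/2]$ with step $\varepsilon=\alpha/(M-1)$, the Riemann approximation $\sum_{m}f(m\varepsilon)\approx\tfrac{M-1}{\alpha}\int_{-\alpha/2}^{\alpha/2}f(u)\,du$ already produces the $(M-1)/\alpha$ prefactor seen in \eqref{max_SNR}.

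Next I would evaluate $I=\int_{-\alpha/2}^{\alpha/2}du/[g^{2}+r_{0}^{2}-2gr_{0}\cos(\phi-u)]$. Expanding $\cos(\phi-u)=\cos\phi\cos u+\sin\phi\sin u$ and substituting $t=\tan(u/2)$ reduces $I$ to the rational integral $\int 2\,dt/[(A+B)t^{2}-2Ct+(A-B)]$, where $A=g^{2}+r_{0}^{2}$, $B=2gr_{0}\cos\phi$, $C=2gr_{0}\sin\phi$. The crucial observation is that on completing the square in $t$, the discriminant collapses to $A^{2}-B^{2}-C^{2}=(g^{2}+r_{0}^{2})^{2}-4g^{2}r_{0}^{2}(\cos^{2}\phi+\sin^{2}\phi)=(g^{2}-r_{0}^{2})^{2}$, yielding the antiderivative
\begin{equation*}
F(u)=\frac{2}{g^{2}-r_{0}^{2}}\arctan\!\left(\frac{(g^{2}+r_{0}^{2}+2gr_{0}\cos\phi)\tan(u/2)-2gr_{0}\sin\phi}{g^{2}-r_{0}^{2}}\right).
\end{equation*}
Evaluating $F(\alpha/2)-F(-\alpha/2)$ and using $-\arctan X=\arctan(-X)$ converts the difference into the symmetric sum of two arctangents whose arguments agree apart from the sign of the $2gr_{0}\sin\phi$ term, each carrying the common factor $\tan(\alpha/4)$.

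The final step is to replace $\tan(\alpha/4)$ by $\sqrt{L/(2r_{0}-L)}$. This comes from the geometric setup: the origin $O$ lies on the perpendicular bisector of the chord $DE$, so $L$ equals the sagitta $r_{0}(1-\cos(\alpha/2))$, and the half-angle identity gives $\tan^{2}(\alpha/4)=(1-\cos(\alpha/2))/(1+\cos(\alpha/2))=L/(2r_{0}-L)$. Substituting this into the arctangent arguments recovers $U(g,r_{0})$ exactly, and multiplying by $\bar{\gamma}_{0}(M-1)/\alpha$ reproduces \eqref{max_SNR}.

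I expect the main obstacle to be the choice of antiderivative form in the integration step: the alternative route of substituting $v=\phi-u$ and invoking the standard formula for $\int dv/(a-b\cos v)$ gives an arctangent of $\tan((\phi\pm\alpha/2)/2)$, whose equivalence to the theorem's symmetric form requires tedious tangent-addition identities. Completing the square \emph{after} the Weierstrass substitution (rather than before) is what makes the target $U(g,r_{0})$ expression drop out transparently, since it keeps $\cos\phi$ and $\sin\phi$ visible in the rational denominator throughout.
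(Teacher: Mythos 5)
Your proposal is correct and follows essentially the same route as the paper's proof: a Riemann-sum conversion of \eqref{sum_SNR} into $\tfrac{M-1}{\alpha}\int \frac{du}{a+b\cos u+c\sin u}$, the arctangent antiderivative with collapsed discriminant $(g^2-r_0^2)^2$ (the paper simply cites Gradshteyn--Ryzhik 2.558 where you re-derive it via the Weierstrass substitution), evaluation at $\pm\alpha/2$, and the sagitta identity $\tan^2(\alpha/4)=L/(2r_0-L)$, which the paper uses implicitly and you helpfully make explicit.
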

	
	\begin{proof}
		Please refer to Appendix A.
	\end{proof}
	  
	\textbf{Theorem~\ref{Theorem1}} shows that with the more generic spherical wavefront XL-UAA channel model in~\eqref{Channel_model}, the resulting SNR of the UAA \emph{no longer} scales linearly with the antenna number $M$ as in the XL-ULA case~\cite{lu2021does}. Instead, it depends on collective properties of the XL-UAA in a more intricate manner, including the distance between the user and the arc center  $g$, and the arc radius $r_0$. To gain further insights, some special cases are considered in the following.

	\begin{corollary} \label{Corollary1}
		When the arc support $L$ $\rightarrow$ 0, the resulting SNR $\gamma_{\mathrm{UAA}}$ in \eqref{max_SNR} reduces to
		\begingroup\makeatletter\def\f@size{10}\check@mathfonts
			\begin{equation}
				\begin{aligned}
					\gamma_{\mathrm{UAA}} =  & \frac{\bar{\gamma}_0}{d r \cos\theta}\left[ \arctan \left(\frac{M d}{2 r \cos\theta}-\tan \theta\right) \right. \\
					&\left. +\arctan \left(\frac{M d}{2 r \cos \theta}+\tan \theta\right)\right], 
				\end{aligned}
			\end{equation}
		\endgroup
		which is consistent with the result for XL-ULAs in \cite{lu2021does}. 
	\end{corollary}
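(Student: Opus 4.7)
The plan is to interpret the limit $L \to 0$ in Corollary~\ref{Corollary1} as the arc flattening into a straight line. Because the sagitta relation $L = r_0(1 - \cos(\alpha/2))$ and the inter-antenna chord condition $d = 2r_0\sin(\alpha/(2(M-1)))$ tie $L$, $r_0$, and $\alpha$ together, driving $L\to 0$ forces $r_0 \to \infty$ and $\alpha \to 0$ simultaneously, while the product $r_0\alpha \to (M-1)d$ (the array aperture) is held fixed. My first step would be to make this scaling explicit and express every quantity appearing in \eqref{max_SNR} as an asymptotic series in $1/r_0$ under this coupling.

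Next I would compute the limits of the geometric ingredients one by one. From $g = \sqrt{r^2 + (r_0-L)^2 + 2r(r_0-L)\cos\theta}$, a Taylor expansion for large $r_0$ gives $g - r_0 \to r\cos\theta$ and therefore $g^2 - r_0^2 \to 2r_0 r\cos\theta$, while $g + r_0 \to 2r_0$ and $\cos\phi \to 1$, so $g^2 + r_0^2 + 2gr_0\cos\phi \to (g+r_0)^2 \to 4r_0^2$. The geometric identity $g\sin\phi = r\sin\theta$ is exact, so $2gr_0\sin\phi \to 2r_0 r\sin\theta$. For the small-$L$ factor, $L \approx r_0\alpha^2/8$ from the sagitta relation yields $\sqrt{L/(2r_0-L)} \to \alpha/4$.

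Substituting these into $U(g,r_0)$, the contribution $\pm 2gr_0\sin\phi/(g^2-r_0^2)$ reduces to $\pm\tan\theta$, while the contribution $(g^2 + r_0^2 + 2gr_0\cos\phi)\sqrt{L/(2r_0-L)}/(g^2-r_0^2)$ reduces to $4r_0^2 \cdot (\alpha/4)/(2r_0 r\cos\theta) = r_0\alpha/(2r\cos\theta) \to Md/(2r\cos\theta)$, absorbing the usual $(M-1) \approx M$ approximation for large $M$. The prefactor simplifies analogously: using $\alpha \to (M-1)d/r_0$, one obtains $2\bar{\gamma}_0(M-1)/[\alpha(g^2 - r_0^2)] \to \bar{\gamma}_0/(dr\cos\theta)$. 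Assembling these pieces reproduces the XL-ULA expression and recovers \cite{lu2021does}.

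The main obstacle will be bookkeeping of the indeterminate forms: each $\arctan$ argument is a ratio whose numerator and denominator both vanish or diverge, at coupled rates through $L$, $r_0$, and $\alpha$. To avoid spurious cancellations I would introduce a single small parameter (say $\alpha$), expand every quantity to consistent leading order in that parameter, and pass to the limit inside the continuous $\arctan$ only at the end.
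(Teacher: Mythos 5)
Your proposal is correct and follows essentially the same route as the paper's Appendix B: both start from the closed form in \eqref{max_SNR}, observe that $L\to 0$ forces $r_0\to\infty$ and $\alpha\to 0$ with the arc length $r_0\alpha\to(M-1)d$ held fixed, and evaluate the limit of the prefactor and of each $\arctan$ argument (the paper's steps $(a)$ and $(b)$ are exactly your $\sqrt{L/(2r_0-L)}\to\alpha/4$ and leading-order ratio limits). Your use of the exact identities $g\sin\phi=r\sin\theta$ and the sagitta relation makes the bookkeeping cleaner, but the substance is identical.
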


	\begin{proof}
		Please refer to Appendix B.
	\end{proof}

	\textbf{Corollary~\ref{Corollary1}} shows that when the arc support $L \rightarrow 0$, the SNR expression of the XL-UAA reduces to that of the XL-ULA. This shows that the new closed-form SNR expression of XL-UAA generalizes the closed-form SNR expression of XL-ULA, which applies to both the XL-UAA and XL-ULA.

	\begin{figure*}[t]
		\centering
		\subfigure[SNRs versus the array aperture $D$.]{\label{Asymptotic_SNR}
			\includegraphics[height=4.23cm]{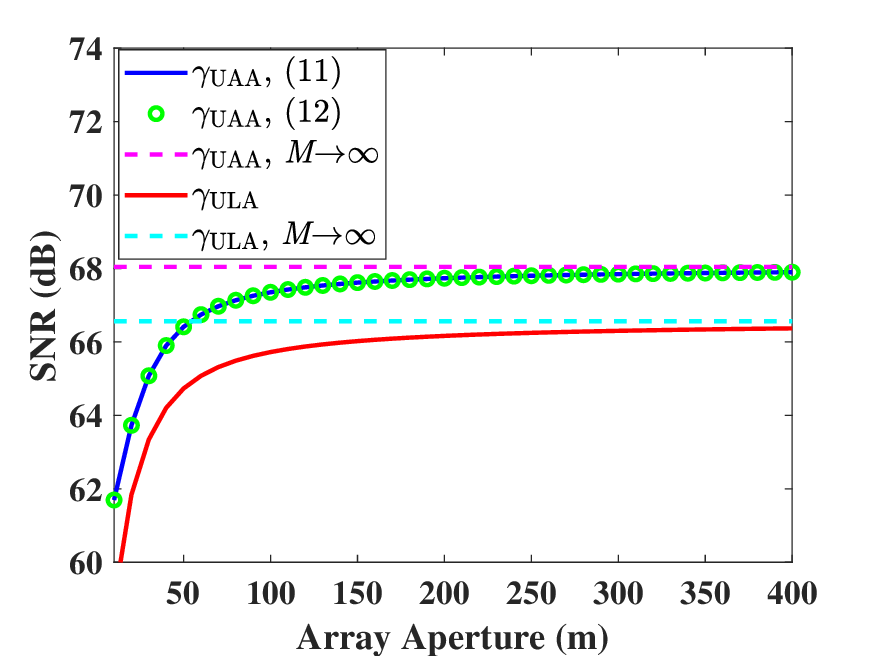}}
		\hspace{5pt}
		\subfigure[SNRs versus the arc support distance $L$.]{\label{L_change}
			\includegraphics[height=4.2cm]{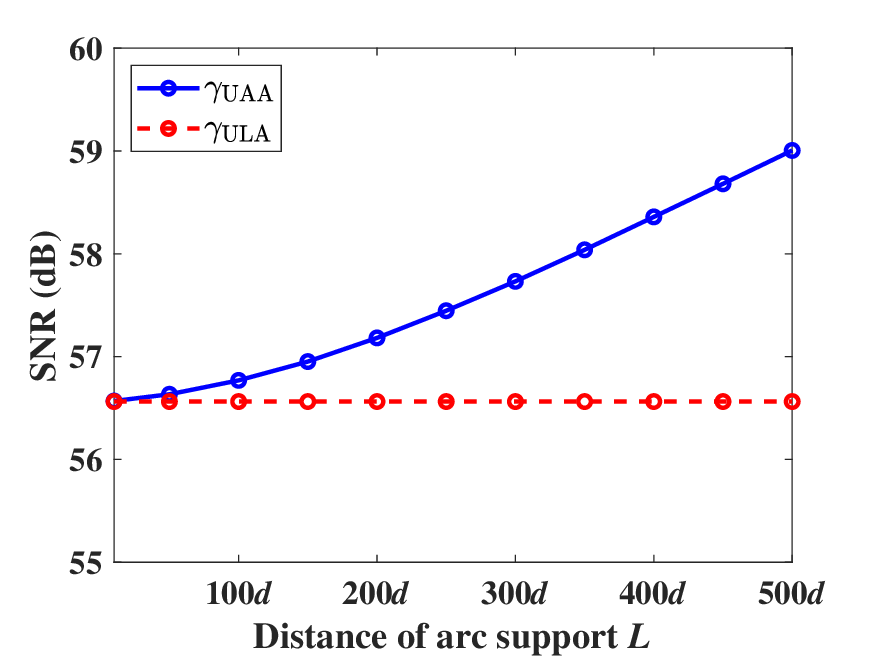}}
		\hspace{5pt}
		\subfigure[SNRs versus user angle $\theta$.]{\label{theta_change}
			\includegraphics[height=4.2cm]{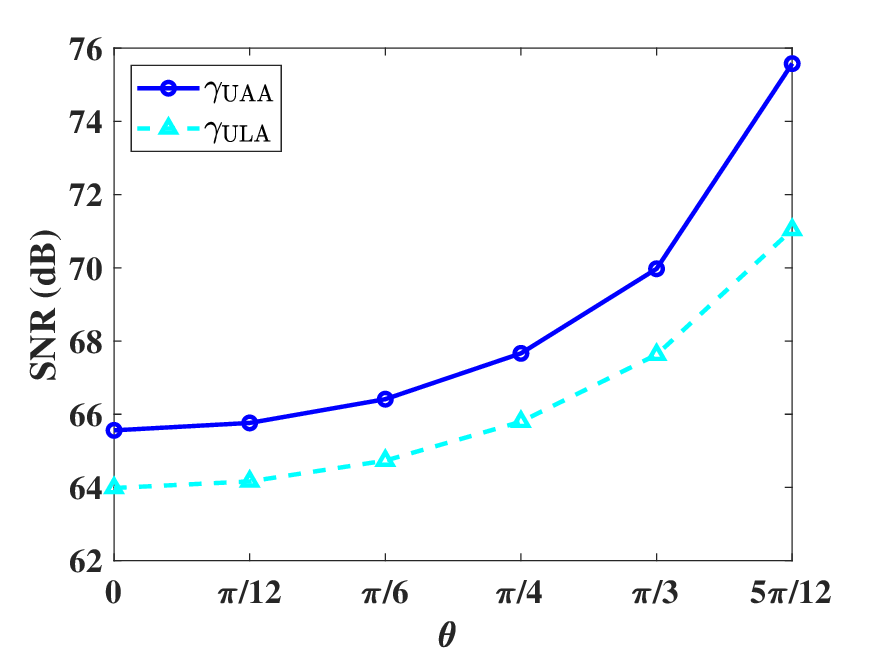}}
		\vspace{-5pt}
		\caption{The SNRs for XL-UAA and XL-ULA.}
		\label{SNRs}
		
	\end{figure*} 
    % \begin{figure}
	% 	\centering
	% 	\includegraphics[width=0.9\linewidth]{Asymptotic_SNR.pdf}
	% 	\caption{The SNRs versus the array aperture for XL-UAA and XL-ULA.}
	% 	\label{Asymptotic_SNR}
	% \end{figure}

	\begin{corollary}
		When the number of XL-UAA antennas goes to infinity, the asymptotical received SNR at the XL-UAA with the MRC receive beamforming is
		\begingroup\makeatletter\def\f@size{10}\check@mathfonts
			\begin{equation} \label{M_infinity}
				\gamma_{\mathrm{UAA}} =  \frac{\bar{\gamma}_0 \pi}{{d(r \cos\theta - L)}}. 
			\end{equation}
		\endgroup
	\end{corollary}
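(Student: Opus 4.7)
The plan is to take the limit $M\to\infty$ directly in the closed-form SNR of \textbf{Theorem~\ref{Theorem1}}, rather than redoing the sum from scratch. The half-wavelength arc-length convention $d=r_0\varepsilon=r_0\alpha/(M-1)$ gives $(M-1)/\alpha=r_0/d$, which when substituted into \eqref{max_SNR} absorbs both $M$ and $\alpha$ into the geometric prefactor: $\gamma_{\mathrm{UAA}}=2\bar{\gamma}_0 r_0\,U(g,r_0)/[d(g^2-r_0^2)]$. Because $\alpha$ is bounded (by $2\pi$), letting $M\to\infty$ is equivalent to letting $r_0\to\infty$, so the whole problem reduces to a large-$r_0$ asymptotic expansion.

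The second step is to expand the geometric quantities inside $U(g,r_0)$ to leading order in $1/r_0$. Using $g^2=r^2+(r_0-L)^2+2r(r_0-L)\cos\theta$, $\cos\phi=(r\cos\theta+r_0-L)/g$, and $\sin\phi=r\sin\theta/g$, I would record $g\sim r_0$, $g^2-r_0^2\sim 2r_0(r\cos\theta-L)$, the exact identity $2gr_0\sin\phi=2r r_0\sin\theta$, $2gr_0\cos\phi=2r_0(r\cos\theta+r_0-L)$, hence $g^2+r_0^2+2gr_0\cos\phi\sim 4r_0^2$, together with $\sqrt{L/(2r_0-L)}\sim\sqrt{L/(2r_0)}$. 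Plugging these into the two arctangent arguments and collecting leading-order terms yields $[\sqrt{2L r_0}\pm r\sin\theta]/(r\cos\theta-L)$.

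Under the standing near-field assumption that the user lies outside the arc and in front of its centre, i.e.\ $r\cos\theta>L$, both arguments diverge to $+\infty$ as $r_0\to\infty$, so each arctangent saturates at $\pi/2$ and $U(g,r_0)\to\pi$. Multiplying by the asymptotic prefactor $2\bar{\gamma}_0 r_0/[d\cdot 2r_0(r\cos\theta-L)]=\bar{\gamma}_0/[d(r\cos\theta-L)]$ then produces the announced limit $\gamma_{\mathrm{UAA}}\to\bar{\gamma}_0\pi/[d(r\cos\theta-L)]$.

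The delicate point is balancing the different orders in the arctangent arguments: the $\sqrt{L/(2r_0-L)}$ factor decays as $r_0^{-1/2}$ but multiplies the $O(r_0^2)$ quantity $g^2+r_0^2+2gr_0\cos\phi$, yielding a dominant $O(r_0^{3/2})$ contribution that has to be shown to dominate the $O(r_0)$ correction $\pm 2gr_0\sin\phi$ and, after dividing by the $O(r_0)$ denominator $g^2-r_0^2$, leave an $O(r_0^{1/2})$ divergence that sends both arctangents to the same limit $\pi/2$ irrespective of the sign. As a sanity check, the same answer is recoverable from the Riemann-sum interpretation $\gamma_{\mathrm{UAA}}=\bar{\gamma}_0\sum_m r_m^{-2}$: in the large-$r_0$ limit the arc flattens to the vertical line $x=L$, and $(\bar{\gamma}_0/d)\int_{-\infty}^{\infty} dy/[(y-r\sin\theta)^2+(r\cos\theta-L)^2]=\bar{\gamma}_0\pi/[d(r\cos\theta-L)]$ reproduces the stated asymptote.
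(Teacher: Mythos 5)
Your proposal is correct and follows essentially the same route as the paper's Appendix C: substitute $(M-1)/\alpha = r_0/d$ into the closed form of Theorem~1, observe that $M\to\infty$ with fixed $d$ forces $r_0\to\infty$, expand $g^2-r_0^2\sim 2r_0(r\cos\theta-L)$, and let each arctangent saturate at $\pi/2$. Your version is in fact more careful than the paper's, since you explicitly track the competing orders $O(r_0^{3/2})$ versus $O(r_0)$ in the arctangent arguments and flag the needed condition $r\cos\theta>L$, which the paper leaves implicit.
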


	\begin{proof}
		Please refer to Appendix C.
	\end{proof}

	It can be shown from~\eqref{M_infinity} that for the XL-UAA with an extremely large number of antennas, the obtained asympotic SNR approaches a constant value determined by the projected distance of the user to the middle of the arc, i.e., $r \cos\theta - L$. Compared with the asymptotic SNR for XL-ULAs that depends only on the projection distance from the user to the array, i.e., $r \cos\theta$~\cite{lu2021does}, the asymptotic SNR of an XL-UAA with $N \to \infty$ has a larger value. This can be intuitively understood, since in the near-field region, the antennas close to the user attribute to more dominant power, while the ones sufficiently farther away from the user provide little SNR enhancement.  
	% This is because the XL-UAA is closer to the user than the XL-ULA with the same aperture. 

	\section{Numerical Results}

	In this section, numerical results are provided to verify our theoretical analysis for XL-UAA near-field communications. We consider an XL-UAA BS operating at 30 GHz which has half-wavelength antenna spacing. The received SNR at reference distance is set as $\bar{\gamma}_0 = 50$ dB~\cite{lu2021does}. For fair comparison, we consider that both the XL-UAA and XL-ULA have the same array aperture~\cite{10243590}.

    % \begin{figure}
	% 	\centering
	% 	\includegraphics[width=0.9\linewidth]{L_change.pdf}
	% 	\caption{The SNRs versus the distance $L$ for XL-UAA and XL-ULA.}
	% 	\label{L_change}
	% \end{figure}

	The DDRayl distance and UPD of XL-ULAs and XL-UAAs are plotted in Fig.~\ref{Distances}, where the array aperture is 0.635 m and the arc support is $L$ = 60 $d$. It is observed that the DDRayl distance and UPD of XL-UAAs are much larger than those of XL-ULAs at all angles. Compared with XL-ULAs, XL-UAAs provide larger DDRayl distance and UPD in the directions of large incidence angles and small incidence angles, respectively. Therefore, XL-UAAs can provide a larger near-field region than XL-ULAs, allowing more users to be located in the near-field of the array.

	Fig.~\ref{Asymptotic_SNR} shows the SNRs for the XL-UAA and XL-ULA systems versus the array aperture. The distance between the user and the origin is $r = 16$ m, the arc support is $L = 800$ $d$, and the angle of the line segment connecting to the user location and the origin relative to the $x$-axis is $\theta = 30^{\circ}$. For XL-UAAs, the SNR results based on both the summation in~\eqref{sum_SNR} and the closed-form expression~\eqref{max_SNR} are shown, together with the SNR limit in~\eqref{M_infinity}. It is observed from Fig.~\ref{Asymptotic_SNR} that the closed-formed SNR expression in~\eqref{max_SNR} perfectly matches with the expression in~\eqref{sum_SNR}, which verifies \textbf{Theorem~\ref{Theorem1}}. Moreover, the results of the XL-UAA also approach to a constant, as the number of antennas increases. This demonstrates the importance of properly modelling the spherical wave in XL-UAA near-field communications. Furthermore, it is worth noting that the SNR for XL-UAAs is higer than that for XL-ULAs. This is expected because the XL-UAA is closer to the user and can deploy more antennas than the XL-UAA at the same array aperture.

	Fig.~\ref{L_change} shows the relationship between SNRs and the arc support $L$. The distance between the user and the origin is $r=16$ m, the angle of the line segment connecting to the user location and the origin relative to the $x$-axis is $\theta=30^{\circ}$, and the array aperture is $5$ m. It is observed that as the arc support $L$ increases, the SNR of XL-UAA increases, which shows that the larger the arc support $L$ is, the greater the SNR of XL-UAA is. Furthermore, when the arc support $L$ is very small, the SNR of XL-UAA is equal to that of XL-ULA, which also verifies the correctness of \textbf{Corollary~\ref{Corollary1}}.

	Fig.~\ref{theta_change} shows the relationship between SNRs and the user angle $\theta$. The distance between the user and the origin is $r=16$ m, the arc support is $L=800$ $d$, and the array aperture is $50$ m. It is observed that as the angle $\theta$ increase, the SNRs of both XL-UAA and XL-ULA increases, and the difference in the SNRs between XL-UAAs and XL-ULAs becomes larger. This shows that the larger the incident angles, the better the performance of XL-UAAs compared to XL-ULAs.

	\vspace{-5pt}
	\section{Conclutions}
	
	In this letter, we investigated the mathematical modeling and performance analysis for the newly-considered XL-UAAs. Under the NUSW channel model, the DDRayl distance and UPD of XL-UAA were obtained, which shows that XL-UAAs have larger DDRayl distance and UPD than XL-ULAs. A closed-form SNR expression for XL-UAA was then derived, indicating that the SNR depends on the collective properties of XL-UAAs. The analysis revealed that XL-UAA modeling is a general form of XL-ULA. Moreover, asymptotic analysis demonstrated that the asymptotic SNR depends on the projection distance from the user to the middle of the arc array. 
	% The numerical results show that XL-UAA has a better SNR than XL-ULA, especially at large incident angles.

	% In this letter, we study the mathematical modeling and performance analysis of the XL-UAA. Considering the NUSW characteristics, the direction-dependent Rayleigh (DDRayl) distance and uniform power distance (UPD) of the XL-UAA are first derived, and the results show that XL-UAA has larger DDRayl distance and UPD than XL-ULA. Subsequently, a closed-form SNR expression of the maximum ratio combining (MRC) beamforming for a single user of XL-UAA is derived. The results show that the obtained closed-form SNR expression depends on the collective properties of the XL-UAA in a sophisticated manner.  The analysis reveals that the developed modeling and performance analysis of XL-UAA is a general form of XL-ULA. Furthermore, asymptotic analysis shows that as the number of antennas increases, the SNR gradually approaches a constant, which depends on the projection distance from the user to the top of XL-UAA.
	% Furthermore, its closed-form SNR expression depends on the collective properties of UAA in a sophisticated manner, and the closed-form SNR expression of XL-ULA is a special form of the XL-UAA SNR expression. The numerical results show that XL-UAA has a better SNR than XL-ULA. 
	% In addition, asymptotic analysis indicates that the asymptotic SNR depends on the projection distance from the user to the top of the XL-UAA.

	\appendices

	\section{Proff of Theorem 1}

	Based on the SNR expression~\eqref{max_SNR}, we first define the function $f\left( x \right) \buildrel \Delta \over = \frac{1}{{a + b \cos{x} + c \sin{x}}}$ over the interval $x \in \left[ { - \frac{M}{2}\varepsilon ,\frac{M}{2}\varepsilon } \right]$. The interval is partitioned into $M$ subintervals each of equal length $\varepsilon$. 
	Since $\varepsilon \ll 1$ in practice ($M$ is very large), we have $f\left( x \right) \approx f\left( m\varepsilon \right)$, $\forall x\in \left[ {(m-\frac{1}{2})\varepsilon ,(m+\frac{1}{2})\varepsilon } \right]$. Then, based on the concept of definite integral, we have
	\begingroup\makeatletter\def\f@size{9}\check@mathfonts
		\begin{equation} \label{Integral}
			\sum_{-\frac{M-1}{2}}^{\frac{M-1}{2}} f\left( m\varepsilon \right) \varepsilon \approx \int_{-\frac{M}{2} \varepsilon}^{\frac{M}{2}\varepsilon} f\left( x \right) dx, 
		\end{equation}  
	\endgroup
	where~\eqref{Integral} holds since $\varepsilon \ll 1$. By substituting $f(x)$ into~\eqref{Integral}, we have~\eqref{Appendix_A}, shown at the top of the this page, where (a) follows from the integral formula 2.558 in~\cite{gradshteyn2014table}, i.e., $\int \frac{{dx}}{{a + b \cos{x} + c \sin{x}}} = \frac{2}{\sqrt{a^2 - b^2 - c^2}} {\arctan\left( {\frac{{\left( {a-b} \right) \tan{x}+c}}{\sqrt{a^2 - b^2 - c^2}}} \right)}$ for $a^2 > b^2 + c^2$. Thus, the proof of Theorem 1 is thus completed.
	
	\begin{figure*} [ht]%hb代表放在文章底部，%ht为放在文章顶部 
		\centering
		\begingroup\makeatletter\def\f@size{7.5}\check@mathfonts
			\begin{equation}  \label{Appendix_A}
				\begin{aligned}
					& \sum_{-\frac{M-1}{2}}^{\frac{M-1}{2}} \frac{1}{{a + b \cos\left( {m \varepsilon} \right) + c \sin\left( m \varepsilon \right)}}  \approx \frac{1}{\varepsilon} \int_{-\frac{M}{2} \varepsilon}^{\frac{M}{2}\varepsilon} \frac{1}{{a + b \cos{x} + c \sin{x} }} dx {\buildrel (a) \over =} \left. {\frac{1}{\varepsilon}\frac{2}{\sqrt{a^2 - b^2 - c^2}}\arctan \left( {\frac{{\left( {a - b} \right)\tan {\frac{x}{2}} + c}}{\sqrt{a^2 - b^2 - c^2}}} \right)} \right|_{-\frac{M}{2}\varepsilon}^{\frac{M}{2}\varepsilon} \\
					% & = \frac{2 \left( M-1 \right)}{\alpha \left| g^2 - r_0^2 \right|} \left. \arctan \left( \frac{\left( g^2 + r_0^2 + 2 g r_0 \cos\phi \right) \tan{\frac{x}{2}} -2 g {r_0} \sin\phi}{\left| g^2 - r_0^2 \right|} \right) \right|_{-\frac{M}{2}\varepsilon}^{\frac{M}{2}\varepsilon}\\
					& = \frac{2 \left( M-1 \right)}{\alpha (g^2 - r_0^2)} \left[ {\arctan\left( \frac{\left( g^2 + r_0^2 + 2 g r_0 \cos\phi \right) \sqrt{\frac{L}{2r_0 - L}} -2 g {r_0} \sin\phi}{g^2 - r_0^2} \right) + \arctan\left( \frac{\left( g^2 + r_0^2 + 2 g r_0 \cos\phi \right) \sqrt{\frac{L}{2r_0 - L}} +2 g {r_0} \sin\phi}{g^2 - r_0^2} \right)} \right].
				\end{aligned} 
			\end{equation}
		\endgroup
	\end{figure*}
    \vspace{-10pt}
	\begin{figure*}[ht] %hb代表放在文章底部，%ht为放在文章顶部 
		\centering
		\begingroup\makeatletter\def\f@size{7.5}\check@mathfonts
		  \begin{equation}  \label{M_inf}
			\begin{aligned}
			  &\gamma_{\mathrm{UAA}} = \frac{2\bar{\gamma}_0 (M-1)}{\alpha ({{r^2}+{L^2}-2rL\cos\theta+2{r_0}(r\cos\theta-L)})} \left.\arctan \left(\frac{\left.\left(4{r_0}^2+4{r_0}(r \cos\theta-L)+{r^2}+L^2-2rL\cos\theta\right) \tan{\frac{\alpha}{4}} - 2 r r_{0} \sin\theta \right)}{({r^2}+{L^2}-2rL\cos\theta+2{r_0}(r\cos\theta-L))}\right)\right|_{-\frac{M}{2} \varepsilon} ^{\frac{M}{2} \varepsilon} \buildrel (a) \over = \\
			  & \mathop {\lim}\limits_{{r_0} \to \infty} \frac{{2\bar{\gamma}_0}}{{({\frac{{{r^2} + {L^2} - 2rL \cos\theta }}{{{r_0}}} + 2(r\cos\theta - L)}) d}}\left. {\arctan\left( {\frac{{{\left( {4{r_0} + (4r \cos\theta - 4L) + \frac{{{r^2} + {L^2} - 2rL \cos\theta}}{r_0}}\right)}{\sqrt{\frac{L}{2{r_0}-L}}} - 2r \sin\theta}}{{({\frac{{{r^2} + {L^2} - 2rL \cos\theta }}{{{r_0}}} + 2(r \cos\theta - L)} )}}} \right)} \right|_{-\frac{M}{2}\varepsilon}^{\frac{M}{2}\varepsilon} 
			  \buildrel (b) \over = \frac{{{\bar{\gamma}_0 \pi}}}{{d(r \cos\theta - L)}}.
			\end{aligned} 
		  \end{equation}
		\endgroup
		\hrulefill
	\end{figure*}

	\section{Proff of Corollary 1}

	As shown in Fig.~\ref{System_model}, when the arc support $L \rightarrow 0$, the arc radius $r_0 \rightarrow \infty$, and the central angle $\alpha \rightarrow 0$.
	The resulting SNR in~\eqref{max_SNR} reduces to 
	\begingroup\makeatletter\def\f@size{7.5}\check@mathfonts
		\begin{equation}
			\begin{aligned}
				&\gamma_{\mathrm{UAA}} = \frac{\bar{\gamma}_0 \left( {M-1} \right)}{\alpha}\frac{2}{{{r^2} + 2r{r_0} \cos\theta}} \\
				&\quad \quad \times \left. \arctan\left( {\frac{{\left( {r^2} + 4r_0^2 + 4r{r_0} \cos\theta \right) \tan{\frac{\alpha}{4}} - 2r{r_0} \sin\theta }}{{r^2} + 2r{r_0} \cos\theta }} \right) \right|_{-\frac{M}{2}\varepsilon }^{\frac{M}{2}\varepsilon} \\
				\buildrel (a) \over =& \frac{\bar{\gamma}_0 r_0}{d}\frac{2}{{{r^2} + 2r{r_0} \cos\theta}} \\
				& \quad \times \left. \arctan\left( {\frac{{\left( {r^2} + 4r_0^2 + 4r{r_0} \cos\theta \right) \frac{(M-1)d}{4 r_0} - 2r{r_0} \sin\theta }}{{r^2} + 2r{r_0} \cos\theta }} \right) \right|_{-\frac{M}{2}\varepsilon }^{\frac{M}{2}\varepsilon}\\ 
				=& \mathop {\lim}\limits_{{r_0} \to \infty} \frac{\bar{\gamma}_0}{d} \frac{2}{\frac{r^{2}}{r_{0}}+2 r \cos\theta} \\
				& \quad \quad \quad \times \left. \arctan\left(\frac{\frac{4r_{0}^{2} + 4 r r_{0} \cos\theta + r^{2}}{4r_{0}^{2}} M d-2 r \sin\theta}{\frac{r^{2}}{r_{0}}+2 r \cos\theta}\right) \right|_{-\frac{M}{2}\varepsilon}^{\frac{M}{2}\varepsilon} \\
				\buildrel (b) \over =& \frac{\bar{\gamma}_0}{d r \cos\theta}\left[ \arctan \left(\frac{M d}{2 r \cos\theta}-\tan \theta\right) +\arctan \left(\frac{M d}{2 r \cos\theta}+ \tan\theta \right)\right],
			\end{aligned} 
		\end{equation}
	\endgroup
	where $(a)$ holds due to $\sin \frac{\alpha_0}{2} = \frac{d}{2 r_0}$ and $\sin x \approx \tan x \approx x$ at $x \rightarrow 0$, and $(b)$ holds due to $\frac{a_1 x^2 + b_1 x +c_1}{a_2 x^2 + b_2 x +c_2} = \frac{a_1}{a_2}$ at $x \rightarrow +\infty$. Thus, the proof of Corollary 1 is completed.

	\section{Proff of Corollary 2}

	As shown in Fig.~\ref{System_model}, when the arc support $L$ remains constant and the number of antennas $M \rightarrow \infty$, the arc radius $r_0 \rightarrow \infty$. Then, the resulting SNR in~\eqref{max_SNR} reduces to~\eqref{M_inf}, as shown on top of this page, where $(a)$ holds due to $\sin \frac{\alpha_0}{2} = \frac{d}{2 r_0}$ and $\sin x \approx \tan x \approx x$ at $x \rightarrow 0$, and $(b)$ holds due to $\arctan x = \frac{\pi}{2}$ at $x \rightarrow +\infty$. Thus, the proof of Corollary 2 is completed.

	\bibliographystyle{IEEEtran}
	\bibliography{IEEEabrv,mybib}
	
\end{document}